\documentclass[a4paper,11pt]{article}
\pdfoutput=1
\usepackage{graphicx}
\usepackage{amsmath,amssymb,amsthm}
\usepackage[numbers]{natbib}
\usepackage[english]{babel}
\usepackage{subfigure}
\usepackage{hyperref}

\newcommand{\vc}{{\cal V}_C}
\newcommand{\vr}{{\cal V}_R}
\newtheorem{thm}{Theorem}
\newtheorem{lem}[thm]{Lemma}

\title{An Approximation Ratio for Biclustering\footnote{Cite as 
\href{http://arxiv.org/abs/0712.2682v2}{arXiv:0712.2682v2 [cs.DS]}. 
This is a revised version of the preprint originally published as
\href{http://arxiv.org/abs/0712.2682v1}{arXiv:0712.2682v1 [cs.DS]}
on 17 December 2007. 
To be published in Information Processing Letters 108 (2008) 45--49
[\href{http://dx.doi.org/10.1016/j.ipl.2008.03.013}{doi:10.1016/j.ipl.2008.03.013}].}}

\author{Kai Puolam\"aki\footnote{Kai.Puolamaki@tkk.fi}\\
Sami Hanhij\"arvi\footnote{Sami.Hanhijarvi@tkk.fi}\\
Gemma C.~Garriga\footnote{Gemma.Garriga@tkk.fi}\\
~\\
Helsinki Institute for Information Technology HIIT\\
Helsinki University of Technology\\
P.O. Box 5400\\
FI-02015 TKK\\
Finland}

\date{22 August 2008}

\begin{document}

\maketitle

\begin{abstract}

The problem of biclustering consists of the simultaneous clustering of rows and
columns of a matrix such that each of the submatrices induced
by a pair of row and column clusters is as uniform as possible.
In this paper we approximate the optimal biclustering by applying one-way
clustering algorithms independently on the rows and on the columns
of the input matrix. 
We show that such a solution yields a worst-case approximation
ratio of $1+\sqrt 2$ under $L_1$-norm for 0--1 valued matrices, and
of $2$ under $L_2$-norm for real valued matrices.

\end{abstract}

~

\noindent {\footnotesize{Keywords: Approximation algorithms;
    Biclustering; One-way clustering}}

\section{Introduction}

The standard clustering problem \cite{JainDubes88} consists of
partitioning a set of input vectors, such that the vectors in each
partition (cluster) are close to one another according to some
predefined distance function. This formulation is the objective of the
popular $K$-means algorithm (see, for example, \cite{Kanungo04}), where
$K$ denotes the final number of clusters and the distance function is
defined by the $L_2$-norm. Another similar example of this formulation
is the $K$-median algorithm (see, for example, \cite{Arya01}), where the
distance function is given by the $L_1$-norm. Clustering a set of
input vectors is a well-known NP-hard problem even for $K=2$ clusters
\cite{Drineas04}. Several approximation guarantees have been shown for
this formulation of the standard clustering problem (see \cite{Arya01,
  Kanungo04,Arthur07} and references therein).

Intensive recent research
has focused on the discovery of homogeneous 
substructures in large matrices. This is also one of the goals in the 
problem of {\em biclustering}.
Given a set of $N$ rows in $M$ columns from a matrix $X$,
a biclustering algorithm identifies subsets of 
rows exhibiting similar behavior across a subset of columns, or vice versa. 
Note that the optimal solution for this problem 
necessarily requires to cluster the $N$ vectors and the $M$ dimensions
simultaneously, thus the name biclustering.  
Each submatrix of $X$, induced by a pair of row and column clusters,
is typically referred to as a {\em bicluster}. 
See Figure \ref{fig:ex} for a simple toy example. 
The main challenge of a biclustering algorithm lies in
the dependency between the row and column partitions, which makes
it difficult to identify the optimal biclusters. A
change in a row clustering affects the cost of the induced
submatrices (biclusters), and as a consequence, the column clustering
may also need to be changed to improve the solution.

\begin{figure*}
	\begin{center}
		\subfigure[]{\includegraphics[width=0.25\textwidth]{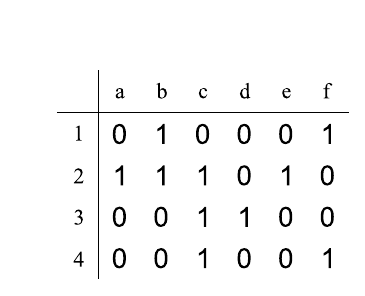}}\hspace{0.04\textwidth}\subfigure[]{\includegraphics[width=0.25\textwidth]{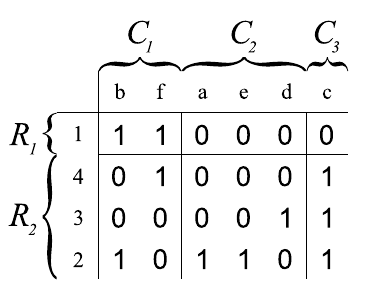}}\hspace{0.04\textwidth}\subfigure[]{\includegraphics[width=0.25\textwidth]{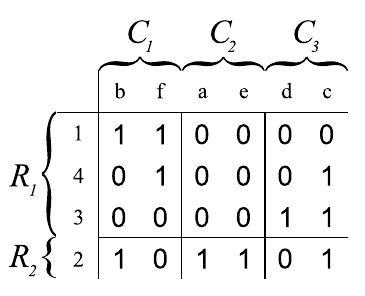}}
		\caption{\label{fig:ex} (a) An example binary data matrix $X$ of dimensions $4\times6$, with rows and columns
labeled with numbers and characters. (b) The optimal biclustering of $X$ consists of 
$\{R^*_1,R^*_2\}=\{\{1\},\{2,3,4\}\}$ row clusters 
		and $\{C^*_1,C^*_2,C^*_3\} = \{\{b,f\},\{a,d,e\},\{c\}\}$ column clusters when using $L_1$-norm.
		(c) Biclusters of the data matrix returned by our scheme, that is, using twice
		an optimal one-way clustering algorithm, once on the $4$ row vectors and another on the $6$ column vectors, 
		with $L_1$-norm. Resulting clusterings are $\{R_1,R_2\}=\{\{1,3,4\},\{2\}\}$ for rows and 
		$\{C_1,C_2,C_3\}=\{\{b,f\},\{a,e\},\{d,e\}\}$ for columns.
		For visual clarity, the rows and columns of the original 
		matrix in (a) have been permuted in (b) and (c) by making the rows (and columns) of a single cluster adjacent.
			}	
	\end{center}
\end{figure*}

Finding an optimal solution for the biclustering problem is
NP-hard. This observation follows directly from the reduction of the 
standard clustering problem (known to be NP-hard) to the
biclustering problem  by fixing the number of clusters in
columns to $M$. 
To the best of our knowledge, no algorithm exists that can efficiently
approximate biclustering with a proven approximation ratio. The goal
of this paper is to propose
such an approximation guarantee by means of a very simple scheme.

Our approach will consist of relieving the requirement for simultaneous 
clustering of rows and columns and instead perform them independently. 
In other words, our final biclusters will correspond to the 
submatrices of $X$ induced by pairs of row
and columns clusters, found independently with a standard clustering algorithm. 
We sometimes refer to this standard clustering algorithm as one-way clustering. 
The simplicity of the solution alleviates us
from the inconvenient dependency of rows and columns. 
More importantly, the solution obtained with this approach, despite not 
being optimal, allows for the study of
approximation guarantees on the obtained biclusters. 
Here we prove that our solution achieves a worst-case approximation
ratio of $1+\sqrt 2$ under $L_1$-norm for 0--1 valued matrices, and
of $2$ under $L_2$-norm for real valued matrices.

Finally, note that our final solution is constructed on top of a
standard clustering algorithm (applied twice, once in row vectors and the other
in column vectors) and therefore, it is necessary to multiply our
ratio with the approximation ratio achieved by the used standard
clustering algorithm (such as
\cite{Arya01,Kanungo04}). 
For clarity, 
we will lift this restriction in the following proofs 
by assuming that the applied one-way clustering algorithm 
provides directly an optimal solution
to the standard clustering problem.

\subsection{Related work}

This basic algorithmic problem and several variations were initially
presented in \cite{Hartigan72} with the name of direct clustering. The
same problem and its variations have also been referred to as two-way
clustering, co-clustering or subspace clustering. In practice,
finding highly homogeneous biclusters has important applications in
biological data analysis (see \cite{Madeira2004} for review and
references), where a bicluster may, for example, correspond to an activation
pattern common to a group of genes only under specific experimental
conditions.

An alternative definition of the basic biclustering problem described in
the introduction consists
on finding the maximal bicluster in a given matrix. A well-known connection
of this alternative formulation is its reduction
to the problem of finding a biclique in a bipartite graph~\cite{Hochbaum98}. 
Algorithms for detecting
bicliques enumerate them in the graph by
using the monotonicity property that a subset of a biclique is also a
biclique~\cite{Alexe04,Eppstein94}. These algorithms usually have a high order of complexity.

\section{Definitions}

We assume given a matrix $X$ of size $N\times M$, and integers $K_r$ and $K_c$, 
which define the number of clusters partitioning rows and columns, respectively. 
The goal is to approximate the optimal biclustering of $X$ by
means of a one-way row clustering into $K_r$ clusters and a
one-way column clustering into $K_c$ clusters.

For any $T \in \mathbb{N}$ we denote $[T] = \{1,\ldots,T\}$. We use 
$X(R,C)$, where $R\subseteq [N]$ and $C\subseteq [M]$, to
denote the submatrix of $X$ induced by the subset of rows $R$ and the
subset of columns $C$. Let $Y$ denote an induced submatrix of $X$,
that is $Y = X(R,C)$ for some
$R\subseteq [N]$ and $C\subseteq [M]$.
When required by the context, we will also refer to $Y = X(R,C)$ as a
bicluster of $X$ 
and denote the size of $Y$ with $n \times m$, where $n \leq N$ and $m \leq N$.
We use ${\rm{median}}(Y)$ and ${\rm{mean}}(Y)$ to denote
the median and mean of all elements of $Y$, respectively.

The scheme for approximating the optimal biclustering is defined as follows.\vspace{5mm}\\
\begin{tabular}{p{0.94\linewidth}} 
\hline 
\textbf{Input:} matrix $X$, number of row clusters $K_r$, number of column clusters $K_c$\\ \\
${\cal R}=\mathrm{kcluster}(X,K_r)$\\
${\cal C}=\mathrm{kcluster}(X^T,K_c)$\\ \\
\textbf{Output:} a set of biclusters $X(R,C)$, for each $R \in {\cal R}$, $C \in {\cal C}$
\\
\hline
\end{tabular}
\vspace{5mm}

The function $\mathrm{kcluster}(X,K_r)$ denotes here 
an optimal one-way clustering algorithm 
that partitions the row vectors of matrix 
$X$ into $K_r$ clusters. 
We have used $X^T$ to denote the transpose of matrix $X$. 

Instead of fixing a specific norm for the formulas, we use the 
dissimilarity measure ${\cal V}()$ to absorb the norm-dependent part. 
For $L_1$-norm, ${\cal V}()$ would be defined as 
${\cal V}(Y)= \sum_{y\in Y}{\left|y-{\rm{median}}(Y)\right|}$,
and for $L_2$-norm as 
${\cal V}(Y)= \sum_{y\in Y}{\left(y-{\rm{mean}}(Y)\right)^2}$.
Given $Y$ of size $n\times m$, we further use a special
row norm, $\vr (Y)= \sum_{j=1}^m{{\cal V}(Y([n],j))}$, 
and a special column norm, $\vc (Y)=\sum_{i=1}^n{{\cal V}(Y(i,[m]))}$.

We define the one-way row clustering, given by kcluster above,
as a partition of rows $[N]$ into
$K_r$ clusters ${\cal R}=\{R_1,\ldots,R_{K_r}\}$ 
such that the cost function
\begin{equation}
L_R = \sum_{R\in{\cal R}}{\sum_{j=1}^M{{\cal V}(X(R,j))}}
\label{eq:lr}
\end{equation}
is minimized. Analogously, the one-way clustering of columns $[M]$ 
into $K_c$ clusters ${\cal C}=\{C_1,\ldots,C_{K_c}\}$
is defined such that the cost function
\begin{equation}
L_C = \sum_{i=1}^{N}{\sum_{C\in{\cal C}}{{\cal V}(X(i,C))}}
\label{eq:lc}
\end{equation}
is minimized.

The cost of biclustering, induced by the two one-way clusterings
above, is
\begin{equation}
L=\sum_{R\in{\cal R}}{\sum_{C\in{\cal C}}{{\cal V}(X(R,C))}}.
\label{eq:l}
\end{equation}

Notice that we are assuming that the one-way clusterings above,
denoted ${\cal R}$ on rows and ${\cal C}$ on columns, 
correspond to optimal one-way partitionings on rows and columns,
respectively. 	

Finally, the optimal biclustering on $X$ is given
by simultaneous row and column partitions ${\cal R}^*=\{R_1^*,\ldots,R_{K_r}^*\}$ and
${\cal C}^*=\{C_1^*,\ldots,C_{K_c}^*\}$, that minimize the cost
\begin{equation}
L^{*}=\sum_{R^*\in{\cal R}^*}{\sum_{C^*\in{\cal C}^*}{{\cal V}(X(R^*,C^*))}}.
\end{equation}

\section{Approximation ratio}
\label{sec:lg}

Given the definitions above, our main result reads as follows.
\begin{thm}
  There exists an approximation ratio of $\alpha$ such that $L\le\alpha
  L^{*}$, where $\alpha=1+\sqrt 2\approx 2.41$ for $L_1$-norm and
  $X\in \{0,1\}^{N\times M}$, and $\alpha=2$ for $L_2$-norm and
  $X\in\mathbb{R}^{N\times M}$.
\label{thm:main}
\end{thm}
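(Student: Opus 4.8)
The plan is to prove a handful of per-bicluster inequalities relating ${\cal V}$, $\vr$ and $\vc$, and then lift them to the global quantities $L$, $L_R$, $L_C$, $L^{*}$ using only the optimality of the two one-way clusterings. The first step is pure bookkeeping: since ${\cal C}$ partitions $[M]$, grouping the columns in \eqref{eq:lr} by their column cluster gives $L_R=\sum_{R\in{\cal R}}\sum_{C\in{\cal C}}\vr(X(R,C))$, and symmetrically $L_C=\sum_{R\in{\cal R}}\sum_{C\in{\cal C}}\vc(X(R,C))$. Thus $L$, $L_R$ and $L_C$ are all sums over the \emph{same} family of biclusters $X(R,C)$, of ${\cal V}$, $\vr$ and $\vc$ respectively, so it suffices to control ${\cal V}(Y)$, $\vr(Y)$, $\vc(Y)$ for a single induced submatrix $Y$.

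Two easy inequalities hold for both norms. Because the centre minimising ${\cal V}$ over a single column (its own median or mean) can only beat the global centre of $Y$, summing over columns gives $\vr(Y)\le{\cal V}(Y)$, and likewise $\vc(Y)\le{\cal V}(Y)$. I would then feed the optimal biclustering back as a \emph{feasible} one-way solution: ${\cal R}^{*}$ is an admissible row clustering, so optimality of ${\cal R}$ yields $L_R\le\sum_{R^{*},C^{*}}\vr(X(R^{*},C^{*}))\le\sum_{R^{*},C^{*}}{\cal V}(X(R^{*},C^{*}))=L^{*}$, and symmetrically $L_C\le L^{*}$. These two bounds are norm-independent and are the only place the optimality assumption enters.

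What remains, and this is the crux, is an inequality ${\cal V}(Y)\le\gamma\,(\vr(Y)+\vc(Y))$, since summing it over biclusters gives $L\le\gamma(L_R+L_C)\le2\gamma L^{*}$. For $L_2$ I would use the orthogonal two-way analysis-of-variance decomposition of the full submatrix $Y$: writing $B_R,B_C$ for the between-row and between-column sums of squares and $E\ge0$ for the interaction term, one gets ${\cal V}(Y)=B_R+B_C+E$, $\vr(Y)=B_R+E$ and $\vc(Y)=B_C+E$, so $\vr(Y)+\vc(Y)={\cal V}(Y)+E\ge{\cal V}(Y)$; hence $\gamma=1$ and the ratio is $2$. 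The $L_1$ case is genuinely harder because this identity fails: for a 0--1 matrix ${\cal V}(Y)$ is the minority count $\min(s,nm-s)$, and an $a\times b$ all-ones block inside an $n\times m$ matrix with $ab>nm/2$ already gives ${\cal V}(Y)=\vr(Y)+\vc(Y)+(n-a)(m-b)>\vr(Y)+\vc(Y)$, so no $\gamma\le1$ can work.

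Here I would aim for the sharp bound ${\cal V}(Y)\le\tfrac{1+\sqrt2}{2}(\vr(Y)+\vc(Y))$ for every 0--1 matrix $Y$, which gives $L\le\tfrac{1+\sqrt2}{2}(L_R+L_C)\le(1+\sqrt2)L^{*}$. The constant is exactly what the block family predicts: with $a/n=b/m=t$ the ratio ${\cal V}/(\vr+\vc)$ equals $(1+t)/(2t)$ on the feasible range $t\ge1/\sqrt2$, which is maximised at $t=1/\sqrt2$ with value $(1+\sqrt2)/2$. I expect the main obstacle to be showing that the block configuration is the worst case over \emph{all} 0--1 matrices, not merely rectangular blocks: the natural route is an extremal argument reducing an arbitrary maximiser of ${\cal V}(Y)/(\vr(Y)+\vc(Y))$ to block form (e.g.\ by sorting rows and columns by their counts and arguing that rearranging entries toward a monotone shape never decreases the ratio), after which the one-parameter optimisation above finishes the bound. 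Finally, the same block instance, embedded so that ${\cal R}^{*},{\cal C}^{*}$ make $L_R=L_C=L^{*}$ simultaneously tight, should show that the ratios $2$ and $1+\sqrt2$ cannot be improved.
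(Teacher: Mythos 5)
Your global architecture is sound and in substance reproduces the paper's Lemma~\ref{lem:alpha}: the bookkeeping identities $L_R=\sum_{R,C}\vr(X(R,C))$ and $L_C=\sum_{R,C}\vc(X(R,C))$ are correct, and your route to the key lower bound $\max(L_R,L_C)\le L^*$ --- feeding ${\cal R}^*$ into the one-way objective as a feasible solution and invoking the per-column bound $\vr(Y)\le{\cal V}(Y)$ --- is valid, and arguably more explicit than the paper's own argument (which observes instead that the optimal biclustering cost cannot increase when $K_r$ is raised to $N$, whence $L^*\ge L_C$, and symmetrically $L^*\ge L_R$). Your $L_2$ case is complete and is exactly the paper's Section~\ref{sec:l2}: your decomposition ${\cal V}(Y)=B_R+B_C+E$, $\vr(Y)=B_R+E$, $\vc(Y)=B_C+E$ is the same identity the paper writes as ${\cal V}(Y)=\vr(Y)+\vc(Y)-\sum_{i,j}(Y(i,j)-\overline Y(i,j))^2$ with the additive model $\overline Y$, so $\gamma=1$ and $\alpha=2$ follow.

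The genuine gap is in the $L_1$ half, and it is precisely the part that constitutes the paper's technical work in Section~\ref{sec:l1}. You verify ${\cal V}(Y)\le\frac{1+\sqrt 2}{2}\left(\vr(Y)+\vc(Y)\right)$ only on the one-parameter block family and then state, as an expectation, that an arbitrary maximiser can be reduced to block form by sorting rows and columns and rearranging toward a monotone shape. That reduction \emph{is} the theorem for 0--1 matrices, and your sketch does not obviously go through: monotone rearrangement naturally yields staircase patterns rather than blocks, and any rearrangement changes which rows and columns are majority-1, which in turn changes how each entry contributes to $\vr$ and $\vc$, so ``never decreases the ratio'' needs a real argument. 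The paper's device is different and local: split $Y$ into blocks $A$, $B$, $C$, $D$ according to the majority rows $O_R$ and majority columns $O_C$, note that swapping a 1 in $B$ or $C$ with a 0 in $A$, or a 1 in $D$ with a 0 in $A$, $B$ or $C$, strictly decreases $\vr(Y)+\vc(Y)$ while fixing ${\cal V}(Y)$, conclude that a worst-case $Y$ admits no such swap, which forces one of three structured configurations, and then solve the resulting small constrained optimisation in $x=n'/n$, $y=m'/m$, $a$, $b$, $c$, $d$. Note also that this optimisation has \emph{two} extremal configurations, the block $x=y=\sqrt{1/2}$ and the complementary ``cross'' $x=y=1-\sqrt{1/2}$ (complementation symmetry of 0--1 matrices relates them), so a correct extremal argument must capture both. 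Without this reduction, or an equivalent one, the $1+\sqrt 2$ claim is unproved; as written, your proposal establishes only the $L_2$ statement plus the lifting machinery.
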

We use the following intermediate result to prove the theorem.
\begin{lem}
There exists an approximation ratio of at most $\alpha$, that is,
$L\le\alpha L^*$,
if for any $X$ and for any partitionings ${\cal R}$ and ${\cal C}$ of $X$, all
biclusters $Y=X(R,C)$, with $R\in{\cal R}$ and $C\in{\cal C}$, satisfy
\begin{equation}
{\cal V}(Y)\le\frac 12 \alpha\left(\vr (Y)+\vc(Y) \right).
\label{eq:thbound}
\end{equation}
\label{lem:alpha}
\end{lem}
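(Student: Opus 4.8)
The plan is to lift the per-bicluster inequality \eqref{eq:thbound} to the global cost $L$ and then bound the two resulting special-norm sums by $L^{*}$. First I would apply the hypothesis to every bicluster $Y=X(R,C)$ produced by our scheme and sum over all $R\in\mathcal{R}$, $C\in\mathcal{C}$, obtaining
\[
L\le\frac{\alpha}{2}\sum_{R\in\mathcal{R}}\sum_{C\in\mathcal{C}}\left(\vr(X(R,C))+\vc(X(R,C))\right).
\]
The next step exploits the definitions of the special norms together with the fact that $\mathcal{R}$ and $\mathcal{C}$ are \emph{partitions}. Since $\vr(X(R,C))=\sum_{j\in C}\mathcal{V}(X(R,j))$ and $\mathcal{C}$ partitions $[M]$, summing over $C$ and over $j\in C$ simply reconstructs the sum over all columns $j\in[M]$; this collapses the double sum of $\vr$ terms into exactly $L_R$ as defined in \eqref{eq:lr}, and symmetrically the $\vc$ terms collapse into $L_C$ from \eqref{eq:lc}. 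Hence $L\le\tfrac{\alpha}{2}(L_R+L_C)$.

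It then remains to show $L_R\le L^{*}$ and $L_C\le L^{*}$; by symmetry I would treat only $L_R$. The key sub-step is a refinement property of $\mathcal{V}$: for any submatrix $Y$ one has $\vr(Y)\le\mathcal{V}(Y)$, because $\vr(Y)$ evaluates each column against its own optimal center, whereas $\mathcal{V}(Y)$ is constrained to use a single center for the whole submatrix, and the median (for $L_1$) respectively the mean (for $L_2$) is precisely the minimizer of the corresponding deviation. Applying this to each bicluster $X(R^{*},C^{*})$ of the optimal biclustering and summing gives
\[
\sum_{R^{*}\in\mathcal{R}^{*}}\sum_{j=1}^{M}\mathcal{V}(X(R^{*},j))=\sum_{R^{*}}\sum_{C^{*}}\vr(X(R^{*},C^{*}))\le\sum_{R^{*}}\sum_{C^{*}}\mathcal{V}(X(R^{*},C^{*}))=L^{*}.
\]
Finally, because $\mathcal{R}$ is an \emph{optimal} one-way row clustering, $L_R$ cannot exceed the one-way cost of the particular partition $\mathcal{R}^{*}$, which is exactly the left-hand side above; therefore $L_R\le L^{*}$. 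The same argument yields $L_C\le L^{*}$, and combining everything gives $L\le\tfrac{\alpha}{2}(L^{*}+L^{*})=\alpha L^{*}$.

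I expect the main obstacle to be the refinement inequality $\vr(Y)\le\mathcal{V}(Y)$ (and its transpose $\vc(Y)\le\mathcal{V}(Y)$): this is the single place where the specific structure of $\mathcal{V}$ is genuinely used, since it is what permits replacing one global center by per-column optimal centers without increasing the cost. Everything else — the collapse of the summed special norms into $L_R$ and $L_C$, and the optimality comparison against $\mathcal{R}^{*}$ and $\mathcal{C}^{*}$ — is routine bookkeeping once this inequality and the partition structure are in hand.
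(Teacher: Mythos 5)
Your proposal is correct and follows essentially the same route as the paper: both reduce the claim to $L\le\tfrac{\alpha}{2}(L_R+L_C)$ by summing \eqref{eq:thbound} over all biclusters, and then bound $L_R\le L^{*}$ and $L_C\le L^{*}$ so that $L^{*}\ge\tfrac12(L_R+L_C)$. The only difference is one of rigor: where the paper justifies $L^{*}\ge\max(L_R,L_C)$ informally (refining to $K_r=N$ or $K_c=M$ clusters cannot increase the optimal cost), you make that step explicit via the refinement inequality $\vr(Y)\le{\cal V}(Y)$ (resp.\ $\vc(Y)\le{\cal V}(Y)$) applied to the optimal biclusters together with the optimality of ${\cal R}$ (resp.\ ${\cal C}$), which is exactly the argument the paper's sketch relies on.
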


\begin{proof}
First we note that the cost of the optimal biclustering $L^*$ cannot
increase when we increase the number of row (or column) clusters.  For
example, consider the special case where $K_r=N$ (or $K_c=M$). In such
case, each row (or column) is assigned to its own cluster and the cost
of the optimal biclustering equals the cost of the optimal
one-way clustering on columns $L_C$ (or rows $L_R$). Hence,
the optimal biclustering solution is bounded from below by
\begin{equation}
L^*\ge \max{\left(L_R,L_C\right)}\ge \frac 12\left(L_R+L_C\right)
\label{eq:lower}
\end{equation}

Summing both sides of Equation (\ref{eq:thbound}),
\begin{eqnarray*}
\sum_{R\in {\cal R}}{\sum_{C\in{\cal C}}{{\cal V}(Y)\arrowvert_{Y=X(R,C)}}} \le
\frac 12 \alpha \sum_{R\in {\cal R}}{\sum_{C\in{\cal C}}{\left(\vr (Y)+\vc (Y)\right)\arrowvert_{Y=X(R,C)}}},
\end{eqnarray*}
and using Equations (\ref{eq:lr}), (\ref{eq:lc}) and
(\ref{eq:l}), gives $L\le \frac 12\alpha\left(L_R+L_C\right)$, which
together with Equation (\ref{eq:lower}) implies the approximation ratio
of $L\le\alpha L ^*$.
\end{proof}

Theorem \ref{thm:main} is proven separately in Sections \ref{sec:l1} and \ref{sec:l2}
using Lemma \ref{lem:alpha}. Section \ref{sec:l1} deals with the case of having
a 0--1 valued matrix $X$ and $L_1$-norm distance function, while Section \ref{sec:l2} deals with 
real valued matrix $X$ and $L_2$-norm.

\subsection{$L_1$-norm and 0--1 valued matrix}
\label{sec:l1}

\begin{figure*}[ht!]
\begin{center}
	\subfigure[] {
		\includegraphics[width=0.2\linewidth]{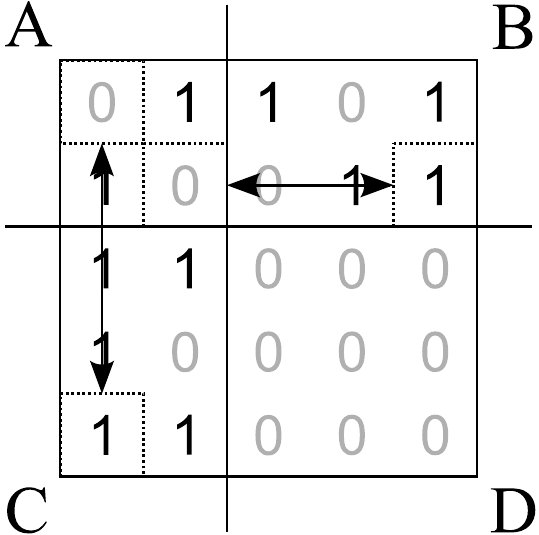}
	} \hspace{0.225\linewidth} 
	\subfigure[] {
		\includegraphics[width=0.2\linewidth]{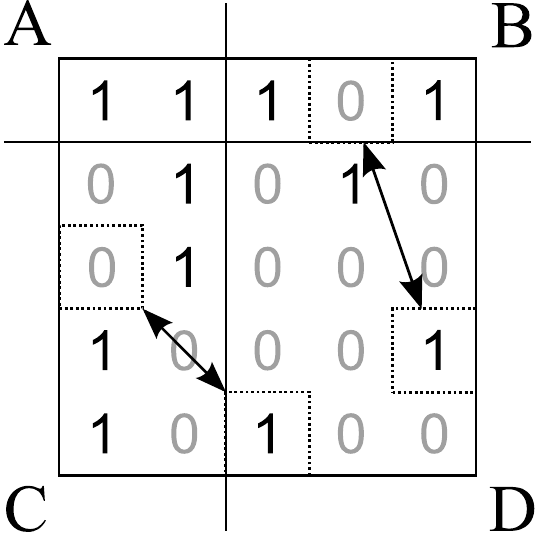}
	}
	\caption{Examples of swaps performed within bicluster $Y$ for the technical part of the proof in
          Section \ref{sec:l1}. For clarity, the
          rows and columns of the bicluster $Y$ have been ordered such
          that the blocks $A$, $B$, $C$ and $D$ are continuous.}
	\label{fig:swaps}
\end{center}
\end{figure*}

Consider a 0--1 valued matrix $X$ and $L_1$-norm. To prove Theorem \ref{thm:main} 
it suffices to show that Equation
(\ref{eq:thbound}) holds for each of the biclusters $Y=X(R,C)$ of $X$,
where $R\in{\cal R}$ and $C\in{\cal C}$. Therefore, in the following we concentrate 
on one single bicluster $Y\in{\{0,1\}}^{n\times m}$.

Without loss of generality, we consider only the case where the bicluster
$Y$ has at least as many 0's as 1's. In such case, the median of $Y$
can be safely taken to be zero and the cost ${\cal V}(Y)\le\frac 12nm$ is then
fixed to the number of 1's in the matrix. 
To get the worst
case scenario towards the tightest upper bound on $\alpha$ in Equation
(\ref{eq:thbound}), we should find first a configuration of 1's such
that, given ${\cal V}(Y)$, the sum $\vr (Y)+\vc (Y)$ is
minimized. 

Denote by $O_R$ and $O_C$ the sets of rows and columns in $Y$ which
have more 1's than 0's, respectively. Denote $A=Y(O_R,O_C)$,
$B=Y(O_R,[m]\setminus O_C)$, $C=Y([n]\setminus O_R,O_C)$,
$D=Y([n]\setminus O_R,[m]\setminus O_C)$, $n'=|O_R|$ and $m'=|O_C|$.
Note that $A$, $B$, $C$ and $D$ are simply blocks of bicluster $Y$,
which we need to make explicit in our notation for the proof.

Changing a 0 to 1 in $A$ or a 1 to 0 in $D$ decreases $\vr (Y)+\vc
(Y)$ by two, while changing a 0 to 1 or 1 to 0 in $B$ or $C$ changes
$\vr (Y)+\vc (Y)$ by at most one.  It follows that {\em swapping} a 1
in $B$ or $C$ with a 0 in $A$ (see Figure \ref{fig:swaps}a), or
swapping a 1 in $D$ with a 0 in $A$, $B$ or $C$ (see Figure
\ref{fig:swaps}b) decreases $\vr (Y)+\vc (Y)$ while ${\cal V}(Y)$
remains unchanged.  In other words, in a solution that minimizes $\vr
(Y)+\vc (Y)$ no such swaps can be made. In the remainder of this
subsection, we assume that the bicluster $Y$ satisfies this mentioned
property.

It follows that (i) $A$, $B$ and $C$ are blocks of 1's, (ii) $A$ is
a block of 1's and $D$ is a block of 0's, or (iii) $B$, $C$ and $D$
are blocks of 0's. Denote by $o()$ the number of 1's in a given
block. It follows that ${\cal V}(Y)=o(A)+o(B)+o(C)+o(D)\le\frac
12nm$, $\vr (Y)=nm'-o(A)+o(B)-o(C)+o(D)$ and $\vc
(Y)=n'm-o(A)-o(B)+o(C)+o(D)$. We denote $x=n'/n$, $y=m'/m$, $a=o(A)/(nm)$,
$b=o(B)/(nm)$, $c=o(C)/(nm)$ and $d=o(D)/(nm)$ and rewrite Equation
(\ref{eq:thbound}) as
\begin{eqnarray*}
\alpha &=& \sup{\left(\frac{2{\cal V}(Y)}{\vr (Y)+\vc (Y)}\right)} \\
  &=&2\sup{\left(\frac{a+b+c+d}{x+y-2a+2d}\right)},
\end{eqnarray*}
with constraints $a+b+c+d\in[0,\frac 12]$, $x\in[0,1]$ $y\in[0,1]$, as
well as (i) $a=xy$, $b=x(1-y)$, $c=(1-x)y$ and $d\in[0,(1-x)(1-y)]$;
(ii) $a=xy$, $b\in[0,x(1-y)]$, $c\in[0,(1-x)y]$ and $d=0$; or (iii)
$a\in[0,xy]$ and $b=c=d=0$.  The optimization problem has two
solutions, (i) $x=y=1-\sqrt{\frac 12}$, $a=xy$, $b=x(1-y)$,
$c=(1-x)y$ and $d=0$, and (ii) $x=y=\sqrt{\frac 12}$, $a=xy$ and $b=c=d=0$, both
solutions yielding $\alpha=1+\sqrt 2$ when exactly half of the entries
in the bicluster $Y$ are 1's. This proves Theorem \ref{thm:main} for 0--1 valued
matrices and $L_1$-norm.

Notice that the above proof relies on the fact that the input 
matrix $X$ has only two types
of values. Therefore, the proof does not generalize to real valued
matrices.

An example of a matrix with approximation ratio of 2 is given by a
$4\times(4q-1)$ matrix
\begin{displaymath}
 X=
  \left(\begin{array}{ccc}
  0\ldots 0 & 1\ldots 1 & 0\ldots\ldots 0 \\
  0\ldots 0 & 1\ldots 1 & 1\ldots\ldots 1 \\
  1\ldots 1 & 0\ldots 0 & 0\ldots\ldots 0 \\
  1\ldots 1 & 0\ldots 0 & 1\ldots\ldots 1
  \end{array}\right)
\end{displaymath}
with $q$ columns in the first column group, $q$ columns in the second
column group and $2q-1$ columns in the third column group, clustered
to two row clusters, $K_r=2$, and one column cluster, $K_c=1$, at the
limit of large $q$. The optimal one-way clustering of rows is given by ${\cal
  R}=\{\{1,2\},\{3,4\}\}$, $L=8q-2$, and the optimal biclustering of rows by
${\cal R}^*=\{\{1,3\},\{2,4\}\}$, $L^*=4q$.

\subsection{$L_2$-norm and real valued matrix}
\label{sec:l2}

Consider now a real valued matrix $X$ and $L_2$-norm. We want to
prove Theorem \ref{thm:main} for the real valued biclusters $Y$ of $X$. 
To find the
approximation ratio, it suffices to show that
Equation (\ref{eq:thbound}) holds for each bicluster 
$Y\in{\mathbb{R}}^{n\times m}$, which are determined by $Y=X(R,C)$,
where $R\in{\cal R}$ and $C\in{\cal C}$.

Using the definitions of ${\cal V}(Y),$ $\vr(Y)$ and
$\vc (Y),$ we can write ${\cal V}(Y) = \vr (Y)+\vc (Y)- 
\sum_{i=1}^{n}{\sum_{j=1}^{m}{\left(Y(i,j)-\overline
      Y(i,j)\right)^2}}\le \vr (Y)+ \vc (Y)$, where $\overline Y(i,j)=
\mathrm{mean}(Y([n],j))+\mathrm{mean}(Y(i,[m]))-\mathrm{mean}(Y)$. Hence,
Equation (\ref{eq:thbound}) is satisfied for $L_{2}$-norm and
real valued matrices when $\alpha=2$.

\section{Conclusions}

We have shown that approximating the optimal biclustering with independent
row- and column-wise standard clusterings achieves a good approximation
guarantee. However in practice, standard one-way clustering
algorithms (such as $K$-means or $K$-median) are also approximate, 
and therefore, it is necessary to
multiply our ratio with the approximation ratio achieved by the
standard clustering algorithm (such as presented in
\cite{Arya01,Kanungo04})
to obtain the true approximation ratio of our scheme. Still, our
contribution shows that in
many practical applications of biclustering, it may be sufficient to
use a more straightforward standard clustering of rows and columns
instead of applying heuristic algorithms without performance
guarantees.


\section{Acknowledgments}

We thank Nikolaj Tatti for reading through the manuscript and giving useful
comments.

\end{document}